\documentclass[12pt]{article}
\usepackage{amsmath,amsfonts,amsthm,amssymb,color}
\usepackage{fancybox}

\textheight 8.5in
\topmargin -0.2in
\oddsidemargin 0.20in
\textwidth 6.3in

\newtheorem{theorem}{Theorem}[section]
\newtheorem{corollary}[theorem]{Corollary}
\newtheorem{lemma}[theorem]{Lemma}

\theoremstyle{definition}
\newtheorem{definition}[theorem]{Definition}
\newtheorem{remark}[theorem]{Remark}

\newenvironment{fminipage}%
  {\begin{Sbox}\begin{minipage}}%
  {\end{minipage}\end{Sbox}\fbox{\TheSbox}}

\newcommand\vv{\overrightarrow{v}}
\newcommand\chiv{\overrightarrow{\chi}}

\title{Constant Arboricity Spectral Sparsifiers}
\author{
Timothy Chu
\\Google \\timchu@google.com
\and
Michael B. Cohen
\\M.I.T \\micohen@mit.edu
\and
Jakub W. Pachocki
\\Carnegie Mellon University
\\pachocki@cs.cmu.edu
\and
Richard Peng
\\M.I.T
\\rpeng@mit.edu
}
\begin{document}
\maketitle
\begin{abstract}
We show that every graph is spectrally similar to the union of a constant number of forests. Moreover, we show that Spielman-Srivastava sparsifiers are the union of $O(\log n)$ forests. This result can be used to estimate boundaries of small subsets of vertices in nearly optimal query time.
\end{abstract}
\section{Introduction}

A sparsifier of a graph G = (V, E, $c$), is a sparse graph $H$ with simliar properties. In this paper, we consider \textbf{spectral sparsifiers}, defined by Spielman and Teng in \cite{SpielmanT04}. A graph $H$ is said to be a $(1+\epsilon)$-spectral sparsifier of a graph $G$ if, for all vectors $x \in \mathbb{R}^{|V|}$,

\begin{equation}\label{1}
\left(\frac{1}{1+\epsilon}\right) x^TL_Gx \leq x^TL_Hx \leq (1+\epsilon)x^TL_Gx.
\end{equation}

Here, $L_G$ and $L_H$ are the Laplacians of $G$ and $H$ respectively.

\vspace{2 mm}

Spielman and Srivastava proved in \cite{SpielmanS08} that every graph has a spectral sparsifier with $O\left(\frac{n \log n}{\epsilon^2}\right)$ edges using an edge sampling routine. Batson, Spielman, and Srivastava proved in \cite{BatsonSS09} that there exist $(1+\epsilon)$-spectral sparsifiers of graphs with $O(n/\epsilon^2)$ edges.  Furthermore, the Marcus-Spielman-Srivastava proof of the Kadison-Singer conjecture in \cite{MarcusSS13} can be used to show that the edge sampling routine of \cite{SpielmanS08} gives an $(1+\epsilon)$-spectral sparsifier with $O(n/\epsilon^2)$ edges, with non-zero probability. \cite{Srivastava13}

Our primary result is to show that Spielman-Srivastava sparsifiers can be written as the union of $O\left(\frac{\log n}{\epsilon^2}\right)$ forests, and that Marcus-Spielman-Srivastava sparsifiers can be written as the union of $O(1/\epsilon^2)$ forests. This is as tight a bound as we can hope for up to $\epsilon$ factors.

Our result can be applied to approximating cut queries efficiently.
As shown by Andoni, Krauthgamer, and Woodruff in \cite{AndoniKW14}, any sketch of a graph that w.h.p. preserves all cuts in an $n$-vertex graph must be of size $\Omega(n/\epsilon^2)$ bits.
We show that the Spielman-Srivastava sparsifiers, in addition to achieving nearly optimal construction time and storage space, can also be made to achieve the nearly optimal query time $O\left(|S|\frac{\log n}{\epsilon^2}\right)$ when estimating the boundary of $S \subseteq V$, compared to the trivial query time of $O\left(n\frac{\log n}{\epsilon^2}\right)$.
\section{Preliminaries}
\subsection{Electrical Flows and Effective Resistance}

Let graph G = (V, E, $c$) have edge weights $c_e$, where $c_e$ is the conductance of each edge.  Define the resistance $r_e$ on each edge to be $\frac{1}{c_e}$. 

Let $L_G$ be the Laplacian of graph $G$. Let $\vv \in \mathbb{R}^|V|$ be the vector of voltages on the vertices of $V$.

Let the vector $\chiv$ denote the vector of excess demand on each vertex. It's well known in the theory of electrical networks that
\begin{equation}
L_G \vv = \chiv,
\end{equation}

or equivalently,
\begin{equation}
\vv = L_G^{+} \chiv
\end{equation}

where $L_G^{+}$ is the Moore-Penrose pseudo-inverse of $L_G$.

For edge $e = (i, j)$ with $i, j \in V$, the effective resistance $R_e(G)$ is defined as

\begin{equation}
R_e(G) = \frac{\chiv^TL_G^{+} \chiv}{2}
\end{equation}

where 
\begin{equation}\label{chiv}
   \chiv := \left\{
     \begin{array}{ll}
       1  &  x = i\\
       -1 &  x = j\\
       0  & \text{otherwise} 
     \end{array}
   \right.
\end{equation} 

The effective resistance of edge $e$ can be interpreted as the voltage drop across that edge given an flow of $1$ unit of current from $i$ to $j$. 

When the underlying choice of graph $G$ is clear, $R_e(G)$ will be shortened to $R_e$.

\begin{lemma} \label{effectresist}
For all graphs $H$ that spectrally sparsify $G$, 
\begin{equation}
\left(\frac{1}{1+\epsilon}\right) R_e(H) < R_e(G) < (1+\epsilon) R_e(H)
\end{equation}

\begin{proof} This follows immediately from Equation \ref{1}, and substituting
$$R_e(G) = \frac{\chiv^TL_G^{+} \chiv}{2}$$
and
$$R_e(H) = \frac{\chiv^TL_H^{+} \chiv}{2}.$$
\end{proof}
where $R_e(H)$ denotes the effective resistance of edge $e$ in $H$ and $R_e(G)$ denotes the effective resistance of $e$ in $G$.
\end{lemma}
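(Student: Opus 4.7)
My plan is to derive the inequality on effective resistances directly from the spectral sparsifier condition in Equation~\ref{1}, by passing from the Laplacians to their pseudo-inverses. The first step is to observe that Equation~\ref{1}, holding for every $x \in \mathbb{R}^{|V|}$, is exactly the statement that
\[
\tfrac{1}{1+\epsilon}\, L_G \preceq L_H \preceq (1+\epsilon)\, L_G
\]
in the positive semi-definite (Loewner) ordering, restricted to the image of $L_G$.

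The main step is to invert these inequalities. Since $G$ and $H$ have the same vertex set and (assuming connectivity) both Laplacians have kernel equal to $\mathrm{span}(\mathbf{1})$, they share the same image. Standard monotonicity of the pseudo-inverse on PSD matrices with matching null spaces then yields
\[
\tfrac{1}{1+\epsilon}\, L_H^+ \preceq L_G^+ \preceq (1+\epsilon)\, L_H^+.
\]
The next step is to plug in the test vector $\chiv$ from Equation~\ref{chiv}. Note that $\chiv$ has entries summing to zero, hence is orthogonal to $\mathbf{1}$ and lies in the image of both Laplacians, so applying the above operator inequalities to $\chiv$ and dividing by $2$ gives exactly
\[
\tfrac{1}{1+\epsilon}\, R_e(H) \leq R_e(G) \leq (1+\epsilon)\, R_e(H),
\]
using the definition of $R_e(G)$ and $R_e(H)$ as $\tfrac{1}{2}\chiv^T L_G^+ \chiv$ and $\tfrac{1}{2}\chiv^T L_H^+ \chiv$ respectively.

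The only subtlety, and the one worth flagging explicitly, is the justification that operator inequalities pass to pseudo-inverses. This requires checking that $L_G$ and $L_H$ have a common null space, which in the disconnected case means $G$ and $H$ must have the same connected components; this is automatic because a sparsifier satisfying Equation~\ref{1} must preserve the cut value between any pair of components (both sides vanish only for vectors constant on components of $G$, forcing $H$'s components to refine $G$'s, and a symmetric argument gives equality). After this observation, the proof reduces to the two-line chain written in the excerpt.
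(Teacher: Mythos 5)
Your proposal is correct and follows the same route the paper takes: the paper's one-line proof ("follows immediately from Equation~\ref{1}" plus the substitution of $R_e = \tfrac{1}{2}\chiv^T L^+ \chiv$) is implicitly exactly your argument, passing the Loewner inequality $\tfrac{1}{1+\epsilon}L_G \preceq L_H \preceq (1+\epsilon)L_G$ through the pseudo-inverse and evaluating at $\chiv$. You simply make explicit the pseudo-inverse monotonicity step and the matching-null-space caveat that the paper leaves unstated, which is a faithful (and slightly more careful) rendering of the same proof.
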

\subsection{The Spielman-Srivastava Sparsifier}

Spielman and Srivastava showed in \cite{SpielmanS08} that any graph can be sparsified with high probability using the following routine, for a large enough constant $C$: 

\begin{itemize}
\item For each edge, assign it a probability $p_e := \frac{R_ec_e}{(n-1)}$, where $R_e$ is the effective resistance of that edge and $c_e$ is the conductance (the inverse of the actual resistance) of that edge. Create a distribution on edges where each edge occurs with probability equal to $p_e$.
\item Weight each edge to have conductance $\frac{c_e\epsilon^2}{(C n \log n) p_e}$, and sample $Cn \log n/\epsilon^2$ edges from this distribution.
\end{itemize}

We call such a scheme a Spielman-Srivastava sparsifying routine. Note that this scheme allows for multiple edges between any two vertices.
\begin{remark} Sampling by \textit{approximate} effective resistances (as Spielman and Srivastava did in their original paper \cite{SpielmanS08}) will work in place of using exact values for effective resistances. The results in our paper will still go through; an approximation will still ensure that every edge has a relatively large weighting, which is what the result in our paper depends on.
\end{remark}

\subsection{The Marcus-Spielman-Srivastava Sparsifier}
The following scheme from \cite{Srivastava13} produces a sparsifier with non-zero probability, for sufficiently large constants $C$:

\begin{itemize}
\item For each edge, assign it a probability $p_e := \frac{R_ec_e}{(n-1)}$, where $R_e$ is the effective resistance of that edge and $c_e$ is the conductance (inverse of actual resistance) of that edge. Create a distribution on edges where each edge occurs with probability equal to $p_e$.
\item Weight each edge to have conductance $\frac{c_e\epsilon^2}{p_e(Cn)}$, and sample $C n/\epsilon^2$ edges from this distribution.
\end{itemize}

We call such a scheme a Marcus-Spielman-Srivastava sparsifying routine. Note that this scheme allows for multiple edges between any two vertices. 

Note that the probability this routine returns a sparsifier may be expontentially small, and there is no known efficient algorithm to actually find such a sparsifier, making the \cite{SpielmanS08} result more algorithmically relevant.
\subsection{Uniform Sparsity and Low Arboricity}

\begin{definition}
The \textbf{arboricity} of a graph $G$ is the equal to the minimum number of forests its edges can be decomposed into.
\end{definition}
\begin{definition}
A graph G = (V, E, $c$) is said to be \textbf{$c$-uniformly sparse} if, for all subsets $V' \subset V$, the subgraph induced on $G$ by $V'$ contains no more than $c \cdot |V'|$ edges.
\end{definition}

\begin{lemma} \label{lowarb}
Uniform Sparsity implies Low Arboricity. That is, if $G$ is $c$-uniformly sparse, then the arboricity of $G$ is no greater than $2c$.
\end{lemma}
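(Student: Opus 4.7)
The plan is to prove the stronger statement that $G$ is $2c$-degenerate, and then deduce the forest decomposition from a standard orientation argument. The first step uses only a counting bound: in any induced subgraph $G[V']$, the sum of degrees is $2|E(V')| \leq 2c|V'|$, so the average degree in $G[V']$ is at most $2c$, and hence some vertex of $V'$ has degree at most $2c$. Iteratively peeling off a minimum-degree vertex from $G$ produces an ordering $v_1, v_2, \ldots, v_n$ of the vertices (where $v_n$ is peeled first and $v_1$ last) with the property that each $v_i$ has at most $2c$ neighbors in $\{v_1, \ldots, v_{i-1}\}$, since those are exactly its neighbors in the subgraph present at the moment of its removal.

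Next I would orient every edge $\{v_i, v_j\}$ with $i < j$ from $v_j$ toward $v_i$. By the degeneracy ordering, each vertex has out-degree at most $2c$ in this orientation. I then partition the out-edges at each vertex arbitrarily into $2c$ color classes, so that each class has out-degree at most $1$ at every vertex. Since every directed edge in any class strictly decreases the vertex index, no directed cycle (and hence, given that each vertex has out-degree at most one in the class, no undirected cycle) can exist inside a color class. Each class is therefore a forest, and the $2c$ classes together cover all edges of $G$, so the arboricity is at most $2c$.

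The only subtle accounting is the factor of two: uniform sparsity bounds edges over vertices by $c$, while the degeneracy argument needs twice the edge-to-vertex ratio because it looks at degree sums. This is where the $2c$ rather than $c$ in the conclusion comes from, and it matches the classical Nash--Williams calculation $|E(V')|/(|V'|-1) \leq c|V'|/(|V'|-1) \leq 2c$ for $|V'| \geq 2$. Beyond this constant, the argument is purely combinatorial and requires no additional machinery.
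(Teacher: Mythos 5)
Your proof is correct and follows essentially the same route as the paper: the uniform sparsity bound forces a minimum-degree vertex of degree at most $2c$ in every induced subgraph, and repeatedly peeling such vertices yields exactly the $2c$-treelike (degeneracy) ordering the paper constructs in its Lemma \ref{lem_treelike}. The only difference is that you explicitly carry out the final step---orienting edges along the ordering and splitting the at most $2c$ out-edges per vertex into forests---which the paper leaves implicit, and your argument for that step is sound.
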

This statement is proven in \ref{lem_treelike}.

\section{The Main Result}

First we establish some preliminary lemmas.
\begin{lemma} \label{foster}
\textit{(Foster's Resistance Theorem)} Let $G = (V, E, c)$ be any graph. Then 
\begin{equation}
\sum_{e \in E} R_ec_e = n-1.
\end{equation}
where $n:=|V|$. \text{\cite{Foster61}}
\end{lemma}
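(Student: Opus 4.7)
The plan is to write the sum as a trace and then use the cyclic property of the trace together with the fact that $L_G^+ L_G$ is the projection onto the image of $L_G$, which has rank $n-1$ for a connected graph.

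First I would set up notation. For each edge $e = (i,j)$, let $\chiv_e \in \mathbb{R}^{|V|}$ be the signed indicator from equation~\ref{chiv}, and let $\bb_e := \sqrt{c_e}\,\chiv_e$. Then the Laplacian admits the factorization
\begin{equation}
L_G = \sum_{e \in E} \bb_e \bb_e^T.
\end{equation}
With this factorization, the definition of effective resistance gives
\begin{equation}
R_e c_e \;=\; c_e \cdot \chiv_e^T L_G^+ \chiv_e \;=\; \bb_e^T L_G^+ \bb_e.
\end{equation}
(Note: I am using the voltage-drop formulation rather than the factor of $2$ in the paper's definition; equivalently, one can set $\chiv_e$ so the factor of $2$ is absorbed. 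Either way the scalar $R_e c_e$ equals $\bb_e^T L_G^+ \bb_e$.)

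Next I would sum over edges and convert to a trace using $\bb_e^T M \bb_e = \operatorname{tr}(M \bb_e \bb_e^T)$:
\begin{equation}
\sum_{e \in E} R_e c_e \;=\; \sum_{e \in E} \operatorname{tr}\!\left(L_G^+ \bb_e \bb_e^T\right) \;=\; \operatorname{tr}\!\left(L_G^+ \sum_{e \in E} \bb_e \bb_e^T\right) \;=\; \operatorname{tr}(L_G^+ L_G).
\end{equation}

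Finally I would identify $\operatorname{tr}(L_G^+ L_G)$ as the rank of $L_G$. The matrix $L_G^+ L_G$ is the orthogonal projector onto the row space of $L_G$, so its trace equals $\operatorname{rank}(L_G)$. For a connected graph on $n$ vertices, $L_G$ has kernel spanned by the all-ones vector and thus rank $n-1$, giving $\sum_{e} R_e c_e = n-1$. For a disconnected graph, one would either restate the theorem per connected component or note that isolated components contribute additional zero eigenvalues; the statement as written assumes connectivity.

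The main obstacle is essentially bookkeeping: getting the scalar factor in the definition of $R_e$ to match cleanly with the factorization $L_G = \sum_e \bb_e \bb_e^T$. Once the factorization and the trace identity $\operatorname{tr}(AB) = \operatorname{tr}(BA)$ are applied, the result is immediate, so I do not expect any real difficulty beyond reconciling conventions.
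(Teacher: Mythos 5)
Your proof is correct, but note that the paper does not actually prove this lemma at all: it is stated as a known result and attributed to Foster \cite{Foster61}, so there is no in-paper argument to compare against. Your trace argument is the standard self-contained proof: writing $L_G = \sum_{e} \bb_e \bb_e^T$ with $\bb_e = \sqrt{c_e}\,\chiv_e$, observing $R_e c_e = \bb_e^T L_G^+ \bb_e$, and using linearity and cyclicity of the trace to get $\sum_e R_e c_e = \operatorname{tr}(L_G^+ L_G) = \operatorname{rank}(L_G) = n-1$ for a connected graph. This is exactly the right route, and supplying it makes the paper more self-contained than the bare citation.

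Two small points of bookkeeping, both of which you already flagged. First, the paper's own definition of effective resistance, $R_e(G) = \chiv^T L_G^+ \chiv / 2$, is nonstandard; taken literally it would make Foster's sum equal to $(n-1)/2$, so your choice to work with the usual convention $R_e = \chiv_e^T L_G^+ \chiv_e$ is the right reconciliation (the factor of $2$ in the paper is best read as a typo, since the rest of the paper, e.g.\ the use of $p_e = R_e c_e/(n-1)$ as a probability distribution via Lemma \ref{foster}, is consistent with the standard convention). Second, the connectivity caveat is genuine: for a graph with $k$ components the identity reads $\sum_e R_e c_e = n-k$, and your remark that the statement as written presumes connectivity (or should be applied per component) is accurate and does not affect how the lemma is used in the paper.
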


\begin{lemma} \label{monotone}
\textit{(Effective Resistances of edges in a subgraph are higher than in the original graph)} Let $H$ be a subgraph of $G = (V, E, c)$, where $L_H$ is treated as a linear operator from $\mathbb{R}^{|V|}$ to $\mathbb{R}^{|V|}$. Then

\begin{equation}\label{lem} 
x^T L_H^+ x \geq x^T L_G^+ x
\end{equation}

for all $x \in \mathbb{R}^{|V|}$ orthogonal to the nullspace of $L_H$. 

\end{lemma}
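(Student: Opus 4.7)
The plan is to combine two ingredients: the Loewner ordering $L_G \succeq L_H$, and a variational (Fenchel-type) formula for the quadratic form of a pseudoinverse that turns this ordering into the reverse ordering on the pseudoinverses, on the appropriate subspace.

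First I would write $L_G - L_H = L_{G \setminus H}$, the Laplacian of the edges of $G$ not in $H$. Since every graph Laplacian is a sum of rank-one PSD matrices (one per edge), $L_{G \setminus H} \succeq 0$, and hence $L_G \succeq L_H$ in the Loewner order. Next I would establish the nullspace inclusion $\mathrm{null}(L_G) \subseteq \mathrm{null}(L_H)$: a vector lies in $\mathrm{null}(L_H)$ iff it is constant on each connected component of $H$, and since the components of $H$ refine those of $G$, any vector constant on components of $G$ is also constant on components of $H$. Equivalently, $\mathrm{range}(L_H) \subseteq \mathrm{range}(L_G)$, so any $x \perp \mathrm{null}(L_H)$ automatically satisfies $x \in \mathrm{range}(L_G)$, and both quadratic forms $x^T L_H^+ x$ and $x^T L_G^+ x$ are well-defined and finite.

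The key step is the variational identity
\begin{equation*}
x^T A^+ x \;=\; \max_{y}\, \bigl(2 x^T y - y^T A y\bigr),
\end{equation*}
valid for any PSD matrix $A$ and any $x \in \mathrm{range}(A)$. I would verify it by completing the square: restricting $y$ to $\mathrm{range}(A)$ (the projection onto $\mathrm{null}(A)$ contributes nothing since $x \perp \mathrm{null}(A)$ and $Ay$ vanishes there), the maximum is attained at $y = A^+ x$ with value $x^T A^+ x$. Applying this identity with $A = L_H$ and $A = L_G$ for the same $x$, and using $y^T L_G y \geq y^T L_H y$ for every $y$, shows that the objective $2 x^T y - y^T L_H y$ pointwise dominates $2 x^T y - y^T L_G y$, so taking maxima yields $x^T L_H^+ x \geq x^T L_G^+ x$.

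The one real subtlety, which is the main thing to get right, is the kernel bookkeeping: pseudoinverses of Loewner-ordered PSD matrices are \emph{not} order-reversing in general when the kernels differ, so one must verify that the $x$ in question lies in both ranges before comparing the two scalars. The hypothesis $x \perp \mathrm{null}(L_H)$ together with the inclusion $\mathrm{null}(L_G) \subseteq \mathrm{null}(L_H)$ is precisely what guarantees this, and after that the variational argument is routine.
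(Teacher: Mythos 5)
Your proof is correct, but it takes a genuinely different route from the paper's. The paper argues spectrally: substituting $x = L_G^{1/2} y$, it reduces the inequality to showing that the nonzero eigenvalues of $L_G^{1/2} L_H^+ L_G^{1/2}$ are at least $1$, converts this to the statement that the nonzero eigenvalues of $(L_G^+)^{1/2} L_H (L_G^+)^{1/2}$ are at most $1$, and reads the latter off from Rayleigh monotonicity, i.e.\ $L_H \preceq L_G$. You instead invoke the variational identity $x^T A^+ x = \max_y \left( 2 x^T y - y^T A y \right)$ for PSD $A$ and $x \in \mathrm{range}(A)$, so that the same Loewner ordering $L_H \preceq L_G$ (which you obtain the same way, by writing $L_G - L_H$ as a sum of per-edge PSD rank-one terms) gives a pointwise comparison of the objectives and hence of the maxima. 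Both arguments rest on the one-line fact $L_H \preceq L_G$; what yours buys is explicit and airtight kernel bookkeeping: you verify $\mathrm{null}(L_G) \subseteq \mathrm{null}(L_H)$, so the hypothesis $x \perp \mathrm{null}(L_H)$ places $x$ in the range of both Laplacians, which is exactly the condition under which the variational formula, and hence the pseudoinverse comparison, is legitimate (without it the comparison can fail, as you note). The paper's conjugation argument leaves a couple of steps implicit (the passage between the spectra of $L_G^{1/2} L_H^+ L_G^{1/2}$ and $(L_G^+)^{1/2} L_H (L_G^+)^{1/2}$, and the treatment of the zero eigenvalues), though these are routine to fill in; your route avoids matrix square roots and eigenvalue manipulations entirely and is arguably the more elementary of the two.
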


\begin{proof}
Let $x = L_G^{\frac{1}{2}}y$ and $y=(L_G^+)^\frac{1}{2}x$. Now Equation \ref{lem} is equivalent to the equation

\begin{equation}
y^T \left(L_G^\frac{1}{2} L_H^+ L_G^\frac{1}{2}\right) y \geq y^T y
\end{equation}

holding true for all vectors $y \in \mathbb{R}^{|V|}$.

Since $x$ is assumed to be orthogonal to the nullspace of $L_H$ (which equals the nullspace of $L_H^+$), it follows that $y$ is orthogonal to the nullspace of $L_G^\frac{1}{2} L_H^+ L_G^\frac{1}{2}$. Therefore it suffices to show that all the non-zero eigenvalues of $L_G^\frac{1}{2} L_H^+ L_G^\frac{1}{2}$ are greater than $1$. 
This is equivalent to showing that all the non-zero eigenvalues of $\left(L_G^+ \right)^\frac{1}{2} L_H \left(L_G^+ \right)^\frac{1}{2}$ are less than $1$, a fact which follows immediately from Rayleigh monotonicity.
\end{proof}

\begin{lemma} \label{sumresist}
Let $V' \subset V$. Let $G'$ be the subgraph of $G$ induced by $V'$, and let $E'$ be the edges of the induced subgraph. 
Then 
\begin{equation}
\sum_{e \in E'} R_e(G)c_e \leq |V'|-1
\end{equation}
\end{lemma}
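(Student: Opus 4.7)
The plan is to combine Foster's Resistance Theorem (Lemma \ref{foster}) applied to the induced subgraph $G'$ with the Rayleigh monotonicity statement (Lemma \ref{monotone}) comparing effective resistances of edges in $G'$ versus in $G$. At a high level, Foster's theorem already gives the desired bound for effective resistances measured inside $G'$, and monotonicity says passing from $G'$ to the larger graph $G$ only decreases effective resistances, so the inequality carries over.

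Concretely, first I would apply Foster's theorem on $G'$. Since $G'$ may be disconnected, I would decompose $V'$ into the connected components $C_1,\dots,C_k$ of $G'$ and apply Lemma \ref{foster} to each component separately, yielding
\begin{equation}
\sum_{e \in E'} R_e(G') c_e \;=\; \sum_{i=1}^{k}(|C_i|-1) \;=\; |V'| - k \;\leq\; |V'| - 1,
\end{equation}
where the last step uses $k \geq 1$ whenever $V' \neq \emptyset$ (the statement is vacuous otherwise).

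Next, I would invoke Lemma \ref{monotone} with $H := G'$, viewed as a subgraph of $G$ on the full vertex set $V$ (with vertices outside $V'$ isolated). For each edge $e = (i,j) \in E'$, the associated demand vector $\chiv$ from Equation \ref{chiv} satisfies $\chiv = \mathbf{1}_i - \mathbf{1}_j$, which is orthogonal to the nullspace of $L_{G'}$ since $i,j$ lie in the same connected component of $G'$ (they are joined by $e$). Thus Lemma \ref{monotone} applies and gives
\begin{equation}
R_e(G) \;=\; \tfrac{1}{2}\chiv^T L_G^+ \chiv \;\leq\; \tfrac{1}{2}\chiv^T L_{G'}^+ \chiv \;=\; R_e(G')
\end{equation}
for every $e \in E'$. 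Multiplying by $c_e \geq 0$ and summing over $e \in E'$ together with the Foster bound above concludes the proof.

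The only subtlety I anticipate is the bookkeeping needed to apply Foster's theorem when $G'$ is disconnected and the verification that the orthogonality hypothesis of Lemma \ref{monotone} holds for each edge's demand vector; both are resolved by the component-wise argument and the observation that the endpoints of an edge of $E'$ always lie in the same component of $G'$. Aside from that, the argument is a direct two-step composition of Lemmas \ref{foster} and \ref{monotone}.
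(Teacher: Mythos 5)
Your proof is correct and follows essentially the same route as the paper: apply Lemma \ref{monotone} (with $H := G'$) to get $R_e(G) \leq R_e(G')$ for each $e \in E'$, then apply Foster's theorem (Lemma \ref{foster}) to $G'$ to bound $\sum_{e \in E'} R_e(G')c_e$. In fact you are slightly more careful than the paper, which tacitly assumes $G'$ is connected when writing the Foster sum as exactly $|V'|-1$; your component-wise bound $|V'|-k \leq |V'|-1$ and the observation that an edge's endpoints share a component (justifying the orthogonality hypothesis) tidy up these details.
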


\begin{proof} Note that $R_e = \frac{1}{2} \chiv L_G \chiv$, where $\chiv$ is defined as in Equation \ref{chiv}. Since $e$ is an edge of subgraph $H$, it follows that $\chiv$ is orthogonal to the nullspace of $L_H$. Thus we can apply Lemma \ref{monotone} and Lemma \ref{foster} to show that

\begin{equation}
\sum_{e \in E'} R_e(G)c_e \leq \sum_{e \in E'} R_e(G')c_e =|V'|-1
\end{equation}

as desired.
\end{proof}

\begin{theorem}\label{main}
Marcus-Spielman-Srivastava sparsifiers are $O(1/\epsilon^2)$-uniformly sparse.
\end{theorem}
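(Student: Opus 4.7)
The plan is to apply Lemma \ref{sumresist} directly to the sparsifier $H$ itself, and then exploit the fact that the Marcus-Spielman-Srivastava reweighting is engineered so that each retained edge contributes a nearly identical amount of $R_e(H) \tilde{c}_e$ to Foster's sum. Together these force the number of edges in any induced subgraph to be at most $O(|V'|/\epsilon^2)$.

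More concretely, let $H$ be an MSS sparsifier of $G$, and let $\tilde{c}_e$ denote the conductance that edge $e$ receives in $H$. By the sampling rule, $\tilde{c}_e = \frac{c_e \epsilon^2}{p_e(Cn)}$ with $p_e = R_e(G)c_e/(n-1)$, so a direct substitution gives $\tilde{c}_e = \frac{\epsilon^2 (n-1)}{C n \cdot R_e(G)}$. Since $H$ is a $(1+\epsilon)$-spectral sparsifier of $G$, Lemma \ref{effectresist} guarantees $R_e(H) = \Theta(R_e(G))$ for every edge $e$ present in $H$. Multiplying, we obtain $R_e(H)\tilde{c}_e = \Theta(\epsilon^2 / C)$ for every edge of $H$ (with the implicit constants independent of $e$).

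Now fix any $V' \subseteq V$ and let $E'_H$ denote the edges of $H$ induced on $V'$ (counted with multiplicity, since $H$ may be a multigraph). Applying Lemma \ref{sumresist} with $G$ replaced by $H$ yields
\begin{equation}
\sum_{e \in E'_H} R_e(H)\, \tilde{c}_e \;\leq\; |V'| - 1.
\end{equation}
Combining with the per-edge estimate $R_e(H)\tilde{c}_e = \Theta(\epsilon^2/C)$ gives $|E'_H| \cdot \Theta(\epsilon^2/C) \leq |V'|-1$, so $|E'_H| = O(|V'|/\epsilon^2)$, which is exactly $O(1/\epsilon^2)$-uniform sparsity.

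I expect the genuine content of the argument to be the two-line observation that the MSS reweighting exactly cancels the $R_e$ factor in Foster's identity, reducing Foster's telescoping sum to a count of edges. The only step that requires a little care is justifying that Lemma \ref{sumresist} applies to $H$ even though $H$ may contain parallel edges and may not span $V$; however, Foster's theorem (and therefore Lemma \ref{sumresist}) extends immediately to multigraphs and to each connected component separately, so this is not a true obstacle. The $(1+\epsilon)$ slack coming from Lemma \ref{effectresist} is absorbed into the $O(\cdot)$ constant.
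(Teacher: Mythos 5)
Your proposal is correct and follows essentially the same route as the paper: compute the per-edge quantity $R_e(H)c_e(H)$, note that the MSS reweighting combined with Lemma \ref{effectresist} makes it $\Omega(\epsilon^2/C)$, and then apply Lemma \ref{sumresist} to $H$ so that the Foster-type bound $\sum_{e \in E'_H} R_e(H)c_e(H) \leq |V'|-1$ caps the number of induced edges at $O(|V'|/\epsilon^2)$. Your added remark about parallel edges and connected components is a minor point of care the paper glosses over, but it does not change the argument.
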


\begin{proof} 
For each edge included in the graph by the Marcus-Spielman-Srivastava sampling scheme, they're included in the graph with weight $\frac{c_e\epsilon^2}{p_e(Cn)}$.  Therefore, by Lemma \ref{effectresist}, the value of $R_e(H)c_e(H)$ on edge $e$ is within a $(1+\epsilon)$ multiple of

\begin{equation}
R_e(G)c_e(H) = R_e \cdot \frac{c_e \epsilon^2}{(Cn)p_e} = R_e \cdot \frac{c_e\epsilon^2}{Cn} \cdot\frac{(n-1)}{R_ec_e} \geq \epsilon^2/(2C).
\end{equation}

Here, $R_e(H)$ and $c_e(H)$ denote the effective resistance and conductance of edge $e$ in graph $H$ respectively, and $R_e$ and $c_e$ denote the effective resistance and conducatnce of edge $e$ in graph $G$ respectively.

\vspace{2 mm}
Using Lemma \ref{sumresist}, it follows that the subgraph induced by $V'$ has no more than $2C(|V'|-1)/\epsilon^2$ edges. This implies that any subgraph of a Marcus-Spielman-Srivastava sparsifier is sparse.
\end{proof}
\begin{corollary}
Marcus-Spielman-Srivastava sparsifiers have $O(1/\epsilon^2)$ arboricity.
\end{corollary}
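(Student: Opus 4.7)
The plan is to obtain the corollary as an immediate consequence of Theorem \ref{main} together with Lemma \ref{lowarb}. Theorem \ref{main} has already established that a Marcus-Spielman-Srivastava sparsifier $H$ is $c$-uniformly sparse for some $c = O(1/\epsilon^2)$: every vertex subset $V' \subseteq V$ induces at most $c |V'|$ edges in $H$. This is exactly the hypothesis needed to invoke the Nash-Williams style bound encoded in Lemma \ref{lowarb}.

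Concretely, I would first restate the conclusion of Theorem \ref{main}, fixing a constant $c_0$ such that every induced subgraph on $V'$ has at most $c_0 |V'|/\epsilon^2$ edges. I would then apply Lemma \ref{lowarb} with parameter $c = c_0/\epsilon^2$, which yields that the edge set of $H$ decomposes into at most $2 c_0/\epsilon^2 = O(1/\epsilon^2)$ forests. By the definition of arboricity, this is the desired bound.

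The only point worth flagging is that Lemma \ref{lowarb} is stated for unweighted graphs (arboricity is a combinatorial notion about the multiset of edges), whereas $H$ carries edge weights assigned by the sampling scheme. This is not an actual obstacle: the arboricity claim concerns the underlying (multi)graph structure, and Theorem \ref{main} counts edges of the induced subgraph without reference to their conductances, so the hypothesis of Lemma \ref{lowarb} applies verbatim. Because MSS sampling may place several parallel copies of the same edge into $H$, one should read ``forest'' here as allowing each sampled copy to be placed in a distinct forest; this is consistent with the convention already adopted in the preceding sections, where multiple edges between two vertices are explicitly permitted.

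Given the chain of prior lemmas, I do not expect any real technical obstacle: the entire argument is ``Theorem \ref{main} plus Lemma \ref{lowarb},'' and the work has already been done in proving those two results.
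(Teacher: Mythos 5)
Your proposal is correct and matches the paper's intent exactly: the corollary is meant to follow immediately from Theorem \ref{main} combined with Lemma \ref{lowarb}, which converts $O(1/\epsilon^2)$-uniform sparsity into an $O(1/\epsilon^2)$ bound on arboricity. Your side remarks about weights and parallel edges are harmless and consistent with the paper's conventions.
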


\begin{theorem} 
Spielman-Srivastava sparsifiers are $O\left(\frac{\log n}{\epsilon^2}\right)$-uniformly sparse.
\end{theorem}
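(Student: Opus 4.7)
The plan is to mirror the proof of Theorem \ref{main} verbatim, with the Spielman-Srivastava per-sample edge weight $\frac{c_e \epsilon^2}{(Cn \log n) p_e}$ substituted for the Marcus-Spielman-Srivastava weight $\frac{c_e \epsilon^2}{(Cn) p_e}$. The only difference is an extra $\log n$ factor that propagates through the arithmetic, which is precisely what weakens the conclusion from $O(1/\epsilon^2)$-uniform sparsity to $O(\log n/\epsilon^2)$-uniform sparsity.

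First, I would invoke Lemma \ref{effectresist} to argue that for every sampled edge $e$, the product $R_e(H) c_e(H)$ agrees with $R_e(G) c_e(H)$ up to a $(1+\epsilon)$ multiplicative factor. The key computation is then
\begin{equation}
R_e(G) c_e(H) = R_e \cdot \frac{c_e \epsilon^2}{(Cn \log n)\, p_e} = \frac{(n-1)\epsilon^2}{Cn \log n},
\end{equation}
obtained by substituting $p_e = R_e c_e/(n-1)$ and noting that the $R_e c_e$ factors cancel. This yields the per-edge lower bound $R_e(H) c_e(H) \geq \epsilon^2/(2C \log n)$.

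Next, I would apply Lemma \ref{sumresist} to the sparsifier $H$ itself: for any $V' \subset V$, the sum of $R_e(H) c_e(H)$ over the edges of $H$ induced on $V'$ is at most $|V'| - 1$. Dividing by the per-edge lower bound shows that the induced subgraph contains at most $2C \log n (|V'| - 1)/\epsilon^2$ edges, which is precisely the definition of $O(\log n/\epsilon^2)$-uniform sparsity.

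I do not anticipate any substantive obstacle. The one piece of bookkeeping worth flagging is that the sampling procedure may produce parallel edges, and one should verify that Foster's theorem and Lemma \ref{sumresist} continue to apply to the resulting multigraph. This is automatic because parallel edges combine additively in conductance while sharing a common endpoint-to-endpoint effective resistance, so the relevant sums remain valid when each multi-edge is counted separately.
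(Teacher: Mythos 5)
Your proposal is correct and follows exactly the paper's own route: the paper proves this theorem by declaring the argument identical to that of Theorem \ref{main} with $Cn$ replaced by $Cn\log n$, which is precisely the substitution you carry out, yielding the per-edge bound $R_e(H)c_e(H) \geq \epsilon^2/(2C\log n)$ and hence at most $2C\log n\,(|V'|-1)/\epsilon^2$ induced edges via Lemma \ref{sumresist}. Your added remark about parallel edges is a reasonable piece of bookkeeping but does not change the argument.
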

\begin{proof}
The proof is identical to the the proof of Theorem \ref{main}, with $Cn$ replaced with $Cn\log n$.
\end{proof}
\begin{corollary}
Spielman-Srivastava sparsifiers have arboricity $O(\frac{\log n}{\epsilon^2})$.
\end{corollary}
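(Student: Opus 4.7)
The plan is to derive this corollary as an immediate consequence of the preceding theorem together with Lemma \ref{lowarb}. First I would invoke the theorem stating that a Spielman-Srivastava sparsifier $H$ is $c$-uniformly sparse with $c = O(\log n / \epsilon^2)$; this gives, for every vertex subset $V' \subseteq V$, that the subgraph of $H$ induced on $V'$ has at most $c \cdot |V'|$ edges.

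Next I would apply Lemma \ref{lowarb}, which converts the uniform sparsity bound into an arboricity bound: any $c$-uniformly sparse graph has arboricity at most $2c$. Feeding in $c = O(\log n / \epsilon^2)$ yields arboricity $O(\log n / \epsilon^2)$, as required.

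There is no real obstacle here, since both inputs are already established earlier in the paper. The only subtlety to double-check is that the sparsifier may have multi-edges (as remarked in the construction), and that the uniform sparsity statement (and its consequence via Lemma \ref{lowarb}) still applies in that setting; a single sentence noting that multi-edges between the same pair are counted in the edge bound, and that each forest in the decomposition can still contain at most one copy of any given pair (so multi-edges get split across different forests), suffices. Since the theorem is stated as a corollary, I would keep the proof to essentially two lines: cite the theorem above, then cite Lemma \ref{lowarb}.
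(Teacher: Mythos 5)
Your proposal is correct and follows exactly the route the paper intends: the corollary is an immediate combination of the preceding theorem ($O(\log n/\epsilon^2)$-uniform sparsity of the Spielman-Srivastava sparsifier) with Lemma \ref{lowarb}, which converts $c$-uniform sparsity into arboricity at most $2c$. Your brief remark about multi-edges is a reasonable sanity check but does not change the argument.
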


\section{Applications to Approximating Cut Queries}

\begin{definition}
	We say that a total ordering $\dot{<}$ of the vertices of a graph is \emph{$c$-treelike} if every vertex $u$ has at most $c$ neighbors $v$ such that $u~\dot{<}~v$. 
\end{definition}
\begin{lemma}
	\label{lem_treelike}
	Every $c$-uniformly sparse graph has a $2c$-treelike ordering. Moreover, this ordering can be computed in linear time.
\end{lemma}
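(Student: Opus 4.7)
The plan is to construct the ordering greedily by repeatedly peeling off a vertex of low degree. Since $G = (V,E)$ is $c$-uniformly sparse, we have $|E| \leq c|V|$, so the sum of degrees is at most $2c|V|$ and hence there exists a vertex of degree at most $2c$. Assign this vertex the smallest rank in $\dot{<}$ (so it is the smallest unplaced vertex), delete it from the graph, and recurse on the induced subgraph. Because uniform sparsity is hereditary (induced subgraphs of a $c$-uniformly sparse graph are themselves $c$-uniformly sparse), the same argument applies at every step, so we can continue until the graph is empty.

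For correctness, consider any vertex $u$ at the moment it is removed. All vertices $v$ with $u \dot{<} v$ are, by construction, precisely the vertices still present in the current subgraph (they are assigned larger ranks later). Hence the neighbors $v$ of $u$ with $u \dot{<} v$ are exactly the neighbors of $u$ in the current subgraph, of which there are at most $2c$. This is exactly the $2c$-treelike condition.

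For the linear-time claim, I would use the standard bucket-queue data structure familiar from computing the degeneracy ordering of Matula and Beck. Maintain an array of doubly linked lists indexed by degree $0, 1, 2, \ldots, 2c$ (all other vertices either have larger degree, in which case they can be ignored until their degree drops, or are stored in an overflow bucket, though here the hereditary bound guarantees the minimum-degree vertex always sits in a bucket of index $\leq 2c$). Each iteration, pick any vertex from the lowest nonempty bucket, remove it, and update its neighbors' degrees, moving each affected neighbor to a new bucket in $O(1)$ time. The total work is $O(|V| + |E|) = O(|V|)$ since $|E| = O(c|V|)$ and $c$ is treated as a constant in the statement.

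The only mildly subtle step is showing that the hereditary property of uniform sparsity guarantees a vertex of degree $\leq 2c$ remains available at every stage; this is immediate from the averaging argument applied to the current induced subgraph, so there is no real obstacle. Everything else is bookkeeping.
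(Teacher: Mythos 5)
Your proposal is correct and follows essentially the same approach as the paper: greedily peeling a minimum-degree vertex (which has degree at most $2c$ by the averaging argument on the $c$-uniformly sparse induced subgraph) and recursing. You simply supply details the paper leaves implicit, namely the hereditary sparsity argument and the bucket-queue implementation that justifies the linear-time claim.
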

\begin{proof}
	Let $G$ be a $c$-uniformly sparse graph.
	Let $v$ be the minimum degree vertex of $G$.
	Note that $d(v) \leq 2c$.
	We set $v$ to be the smallest in the ordering $\dot{<}$ and then recursively construct the remainder of the ordering on $G' = G \ \{v\}$.
\end{proof}
\begin{lemma}
	Let $G = (V, E)$ be $c$-uniformly sparse.
	After preprocessing in linear time and space, we can answer queries about the boundaries of subsets of vertices of $G$ in $O(ck)$ time, where $k$ is the size of the queried subset.
\end{lemma}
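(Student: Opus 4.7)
The plan is to reduce the boundary-size query to enumerating the internal edges of $S$, using the $2c$-treelike ordering from Lemma \ref{lem_treelike}. In preprocessing I would run that lemma to produce an ordering $\dot{<}$ and, for each vertex $u$, store the list of its at most $2c$ up-neighbors $\{v : u \dot{<} v,\, (u,v) \in E\}$, together with its degree (or total incident edge weight) $d(u)$. This costs $O(n+m) = O(cn)$ time and space. I would also allocate a length-$n$ Boolean mark array, initialized to false once, which will support $O(1)$ membership tests against $S$.

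For a query on $S$ with $|S| = k$, I would use the identity
\begin{equation*}
\left|\partial S\right| \;=\; \sum_{u \in S} d(u) \;-\; 2\left|E(S)\right|,
\end{equation*}
where $E(S)$ denotes the set of edges with both endpoints in $S$. The first sum is read off in $O(k)$ from the precomputed degrees. To compute $|E(S)|$, mark the vertices of $S$ in $O(k)$, then for each $u \in S$ walk its up-neighbor list (of length at most $2c$) and increment a counter whenever the up-neighbor is marked. Finally, unmark the vertices of $S$ in $O(k)$ so the mark array is reusable for subsequent queries. Summing up: $O(k) + O(ck) + O(k) = O(ck)$.

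Correctness hinges on a single observation: since $\dot{<}$ is a total order, every edge $(u,v) \in E(S)$ has a unique smaller endpoint, and is discovered exactly once when that endpoint is processed and its larger neighbor is found marked. There is no real obstacle here; the whole difficulty of the lemma is already absorbed into Lemma \ref{lem_treelike}, which converts the uniform-sparsity hypothesis into an $O(c)$ per-vertex access bound. The only subtle design choice is performing the membership test via a persistent mark array that is toggled on and off per query (rather than, say, a hash set built from scratch), which is what keeps the amortized cost strictly linear in $k$ with no hidden hashing constants.
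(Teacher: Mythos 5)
Your proof is correct and follows essentially the same route as the paper: compute a $2c$-treelike ordering via Lemma \ref{lem_treelike}, store each vertex's up-neighbor list and (weighted) degree, and answer a query by subtracting twice the internal edge weight, found by scanning up-neighbor lists, from the degree sum. Your explicit treatment of the $S$-membership test via a reusable mark array is a detail the paper leaves implicit, but it does not change the argument.
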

\begin{proof}
	Using Lemma \ref{lem_treelike} we first compute a $2c$-treelike ordering $\dot{<}$ of $V$.
	For every vertex $u \in V$, we store a list of edges $(u, v) \in E$ such that $u~\dot{<}~v$.
	We also compute and store the weighted degree $wd(v)$ for every vertex of $G$.
	
	Assume we are given $S \subseteq V$.
	We first compute the total weight $s_{internal}$ of edges internal to $S$.
	To this end, for every vertex $u \in S$ we go through its neighbors that are larger in the ordering $\dot{<}$ and sum up the weights of edges that lead to $S$.
	Note that every edge in $S \times S$ will be encountered exactly once.
	The boundary of $S$ can be computed as
	\begin{align*}
		s_{cut} := \left(\sum_{v \in S} wd(v)\right) - 2s_{internal}. 
	\end{align*}
\end{proof}
\begin{corollary}
	Given a graph with $n$ vertices, there exists a data structure that:
	\begin{itemize}
	  \item achieves the construction time, storage space, and cut approximation guarantees of Spielman-Srivastava sparsifiers, and
	  \item can compute approximate weights of cuts in $O(k \frac{\log n}{\epsilon^2})$ time, where $k$ is the size of the smaller side of the cut.
	\end{itemize}
\end{corollary}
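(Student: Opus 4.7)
The plan is to apply the Spielman--Srivastava construction to build the sparsifier $H$ of $G$, then treat $H$ as the underlying data structure and invoke the preceding lemma to answer cut queries quickly.

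First I would construct $H$ using the Spielman--Srivastava routine described in Section~2.3, inheriting its construction time, storage ($O(n \log n / \epsilon^2)$ edges), and $(1+\epsilon)$-spectral approximation guarantees. By the theorem proven just above, $H$ is $O(\log n / \epsilon^2)$-uniformly sparse. This is the key structural fact that makes the cut-query data structure applicable.

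Next I would preprocess $H$ by invoking Lemma~3.6 (the preceding lemma) with $c = O(\log n / \epsilon^2)$: in linear time compute a $2c$-treelike ordering of the vertices of $H$ (via Lemma~\ref{lem_treelike}), store for each vertex its list of larger neighbors and its weighted degree in $H$. The storage remains linear in the number of edges of $H$, hence within the stated bounds.

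To answer a cut query for a set $S \subseteq V$, let $k$ be the size of the smaller side; without loss of generality assume $|S| = k$ (otherwise run the procedure on $V \setminus S$, which has the same boundary). Applying the lemma to $S$ in $H$ returns the weight $s_{cut}(H)$ of the boundary of $S$ in the sparsifier in time $O(ck) = O(k \log n / \epsilon^2)$. Since $H$ is a $(1+\epsilon)$-spectral sparsifier of $G$, Equation~\ref{1} applied to the indicator vector $\mathbf{1}_S$ gives
\begin{equation*}
\left(\frac{1}{1+\epsilon}\right)\, s_{cut}(G) \leq s_{cut}(H) \leq (1+\epsilon)\, s_{cut}(G),
\end{equation*}
so $s_{cut}(H)$ is the desired $(1+\epsilon)$-approximation.

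There is no real obstacle here: the result is a direct composition of the sparsity theorem for Spielman--Srivastava sparsifiers with the treelike-ordering query algorithm. The only mild subtlety is remembering that cut boundaries are symmetric in $S$ and $V \setminus S$, which is what lets the query time depend on the smaller side $k$ rather than $|S|$ itself.
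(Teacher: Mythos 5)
Your proposal is correct and is exactly the composition the paper intends: the corollary follows immediately from the theorem that Spielman--Srivastava sparsifiers are $O(\log n/\epsilon^2)$-uniformly sparse together with the treelike-ordering query lemma, with the spectral guarantee applied to indicator vectors giving the cut approximation. The paper states the corollary without proof, and your argument (including the observation that one queries the smaller side of the cut) supplies precisely the intended justification.
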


\section{Final Note}
Similar techniques to those presented can show that if the vertices of graph $G$ are ordered by the sum of $R_ec_e$ on edges that have an endpoint of that vertex, any graph has a sparsifier that can be written as the union of $O(1/\epsilon^2)$ trees with that topological ordering on their vertices. The proof uses the same machinery as that presented above (with a slightly different use of the Marcus-Spielman-Srivastava sampling scheme), and we omit the full proof here.
\bibliographystyle{alpha}

\end{document}